\title{LIFO-search on digraphs:\\A searching game for cycle-rank\thanks{This work resulted from discussions during Dagstuhl Seminar 11071 on Graph Searching, Theory and Applications}}
\author{Paul Hunter\thanks{Supported by Model Checking Real-Time Systems project, EPSRC grant ref.~BLRQEK}}
\institute{Department of Computer Science, University of Oxford\\\email{paul.hunter@cs.ox.ac.uk}}
\spnewtheorem*{mainthm}{Main Theorem}{\bfseries}{\itshape}
\newcommand{\letters}[1]{\{\!|#1|\!\}}
\newcommand{\st}{\mathrel{:}}
\newcommand{\cycr}[1]{\textrm{cr}(#1)}
\newcommand{\lifo}[1]{\textrm{LIFO}^{#1}}
\newcommand{\sstat}[1]{\textrm{SS}^{#1}}
\newcommand{\pow}[1]{\mathcal{P}(#1)}
\newcommand{\var}{\texttt{\bf gv}}
\newcommand{\ivar}{\texttt{i}}
\newcommand{\vvar}{\texttt{v}}
\newcommand{\iscvar}{\texttt{isc}}
\newcommand{\vscvar}{\texttt{vsc}}
\newcommand{\mvar}{\texttt{m}}
\begin{document}
\maketitle

\begin{abstract}
We consider the extension of the last-in-first-out graph searching game of Giannopoulou and  Thilikos to digraphs.  We show that all common variations of the game require the same number of searchers, and the minimal number of searchers required is one more than the cycle-rank of the digraph.  We also obtain a tight duality theorem, giving a precise min-max characterization of obstructions for cycle-rank.
\end{abstract}
\section{Introduction}
%

Graph searching games are increasingly becoming a popular way to characterize, and even define, practical graph parameters.   There are many advantages to a characterization by graph searching games:  it provides a useful intuition which can assist in constructing more general or more specific parameters; it gives insights into relations with other, similarly characterized parameters; 
and it is particularly useful from an algorithmic perspective as many parameters associated with such games are both structurally robust and efficiently computable.  


One of the most common graph searching games is the node-search game.  In this game several searchers and one fugitive occupy vertices of the graph and make simultaneous moves.  The (omniscient) fugitive moves along searcher-free paths of arbitrary length whereas the searchers' movements are not constrained by the topology of the graph.  The goal of the game is to minimize the number of searchers required to capture the fugitive by cornering him in some part of the graph and placing a searcher on the same vertex.  This game has been extensively studied~\cite{DKT97} and several important graph parameters such as treewidth~\cite{ST93}, pathwidth~\cite{KP86}, and tree-depth~\cite{NdM06} can be characterized by natural variants of this game.  
One variation frequently used, indeed the one which separates treewidth and pathwidth, is whether the location of the fugitive is known or unknown to the searchers.  
Another common variation is whether the searchers use a monotone or a non-monotone searching strategy.  Monotone search strategies lead to algorithmically useful decompositions, whereas non-monotone strategies are more robust under graph operations and hence reflect structural properties, so showing that monotone strategies require no more searchers than non-monotone strategies is an important and common question in the area.
Whilst node-search games on undirected graphs tend to enjoy monotonicity~\cite{BS91,ST93,LaP93}, on digraphs the situation is much less clear~\cite{Bar05,Adl05,KO07}.

Node-search games naturally extend to digraphs, however, in the translation another variation arises depending on how one views the constraints on the movement of the fugitive.  One interpretation is that in the undirected case the fugitive moves along paths, so the natural translation would be to have the fugitive move along directed paths.  Another view is that the fugitive moves to some other vertex in the same connected component, and here the natural translation would be to have the fugitive move within the same strongly connected component.  Both interpretations have been studied in the literature, the former giving characterizations of parameters such as DAG-width~\cite{BDHK06,Obd06} and directed pathwidth~\cite{Bar05} and the latter giving a characterization of directed treewidth~\cite{JRST01}.  

In~\cite{GT11}, Giannopoulou and Thilikos define a variant of the node-search game in which only the most recently placed searchers may be removed; that is, the searchers must move in a last-in-first-out (LIFO) manner.  They show that the visibility of the fugitive is not relevant to the minimum number of searchers required, the game is monotone, and that it characterizes tree-depth.  In this paper we consider the extension of this game to digraphs.

We generalize the results of Giannopoulou and Thilikos by showing that
the minimum number of searchers required to capture a fugitive on a digraph with a LIFO-search is independent of:
\begin{itemize}
\item Whether the fugitive is invisible or visible,
\item Whether the searchers use a monotone or non-monotone search, and
\item Whether the fugitive is restricted to moving in searcher-free strongly connected sets or along searcher-free directed paths.
\end{itemize}
This result is somewhat surprising: in the standard node-search game these options give rise to quite different parameters~\cite{Bar05,BDHK06,KO07}.

We show that on digraphs the LIFO-search game also characterizes a pre-existing measure, cycle-rank -- a generalization of tree-depth to digraphs (though as the definition of cycle-rank predates tree-depth by several decades, it is perhaps more correct to say that tree-depth is an analogue of cycle-rank on undirected graphs).  The cycle-rank of a digraph is an important parameter relating digraph complexity to other areas such as regular language complexity and asymmetric matrix factorization.  It was defined by Eggan~\cite{Egg63}, where it was shown to be a critical parameter for determining the star-height of regular languages,  
and interest in it as an important digraph parameter, especially from an algorithmic perspective, has recently been rekindled by the success of tree-depth~\cite{EL08,Gru08,GHKLOR09}. 

It is well known that tree-depth can also be characterized by a node-search game where a visible fugitive plays against searchers that are only placed and never moved~\cite{GHKLOR09}.  In that paper, Ganian et al.~considered one extension of this game to digraphs.  Here we consider the other natural extension, where the visible fugitive moves in strongly connected sets, and show that it also characterizes cycle-rank.

Our final result uses these graph searching characterizations to define a dual parameter that characterizes structural obstructions for cycle-rank.  We consider two obstructions, motivated by the shelters of~\cite{GT11} and the havens of~\cite{JRST01}, that define simplified strategies for the fugitive. The game characterization then implies that these structural features are necessarily present when the cycle-rank of a graph is large.  By showing that such strategies are also sufficient for the fugitive, we obtain a rare instance of an exact min-max theorem relating digraph parameters.

The results of this paper can be summarized with the following characterizations of cycle-rank.
\begin{mainthm}\label{thm:main2}
Let $G$ be a digraph, and $k$ a positive integer.  The following are equivalent:
\begin{enumerate}[(i) ]
\item $G$ has cycle-rank $\leq k-1$,
\item On $G$, $k$ searchers can capture a fugitive with a LIFO-search strategy,
\item On $G$, $k$ searchers can capture a visible fugitive restricted to moving in strongly connected sets with a searcher-stationary search strategy,
\item $G$ has no LIFO-haven of order $> k$, and
\item $G$ has no strong shelter of thickness $>k$.
\end{enumerate}
\end{mainthm}

The paper is organised as follows.  In Section~\ref{sec:prelim} we recall the definitions and notation that we use throughout the paper.  In Section~\ref{sec:game} we define the LIFO-search and searcher-stationary games and show that they characterize cycle-rank.  In Section~\ref{sec:min-max} we prove the min-max theorem for cycle-rank, and in Section~\ref{sec:conc} we conclude with a discussion on further research and open problems.

\section{Preliminaries}\label{sec:prelim}
All (di)graphs in this paper are finite, simple, directed and without self-loops, although the results readily extend to multigraphs with self-loops.  For simplicity, we also assume that all digraphs contain at least one vertex unless explicitly mentioned.  We use standard notation and terminology, in particular $V(G)$ and $E(G)$ denote the sets of vertices and edges respectively of a digraph $G$ and between digraphs, $\subseteq$ denotes the subgraph relation.  We will often interchange an induced subgraph with the set of vertices which defines it, in particular strongly connected sets of vertices are sets of vertices that induce a strongly connected subgraph, and we will often view strongly connected components as sets of vertices.  Given a digraph $G$ and a set of vertices $X \subseteq V(G)$, we use $G \setminus X$ to denote the subgraph of $G$ induced by $V(G) \setminus X$.  An \emph{initial component} of a digraph $G$ is a strongly connected component $C$ with no edges from $G\setminus C$ to $C$.  $H \subseteq G$ is \emph{successor-closed} if there are no edges in $G$ from $H$ to $G \setminus H$.

Given a finite set $V$, we use $V^*$ to denote the set of finite words over $V$, and $V^{< k}$ to denote the set of words over $V$ of length $< k$.  We use $\epsilon$ to denote the empty word and $\cdot$ or juxtaposition to denote concatenation.  For $X,Y \in V^*$ we write $X \preceq Y$ if $X$ is a prefix of $Y$, that is if there exists a word $Z \in V^*$ such that $Y = X\cdot Z$.  For $X=a_1a_2\cdots a_n\in V^*$, we use $|X|$ to denote the length of $X$, and $\letters{X}$ to denote the set $\{a_1, a_2, \ldots, a_n\}$.  Given two sets $A$ and $B$ we use $A \Delta B$ to denote their symmetric difference, that is $A \Delta B = (A \cup B) \setminus (A \cap B)$.  Given a set $\mathcal{S} \subseteq \mathcal{P}(V)$ of subsets of $V$, a \emph{$\subseteq$-chain} is a subset $\{X_1,\ldots, X_n\} \subseteq \mathcal{S}$ such that $X_1 \subseteq X_2 \subseteq \cdots \subseteq X_n$.  If there is no $Y \in \mathcal{S}$ such that $Y \subset X_1$, $X_i \subset Y \subset X_{i+1}$ for some $i$, or $X_n \subset Y$, then $\{X_1, \ldots, X_n\}$ is a \emph{maximal $\subseteq$-chain}.

The \emph{cycle-rank} of a digraph $G$, $\cycr{G}$, is defined as follows:
\begin{itemize}
\item If $G$ is acyclic then $\cycr{G}=0$.
\item If $G$ is strongly connected then $\cycr{G} = 1 + \min_{v \in V(G)} \cycr{G \setminus \{v\}}$.
\item Otherwise $\cycr{G} = \max_H \cycr{H}$ where the maximum is taken over all strongly connected components $H$ of $G$.
\end{itemize}

\section{Searching games for cycle-rank}\label{sec:game}
We begin by formally defining the LIFO-search game, and its variants, for digraphs.  Each variation of the LIFO-search game gives rise to a digraph parameter corresponding to the minimum number of searchers required to capture the fugitive under the given restrictions.  The main result of this section is that for any digraph all these parameters are equal.  Furthermore, we show they are all equal to
one more than the cycle-rank of the digraph.

\subsection{LIFO-search for digraphs}
In summary, for the graph searching game in which we are interested the fugitive can run along searcher-free directed paths of any length, the searchers can move to any vertex in the graph, and the fugitive moves whilst the searchers are relocating.
The only restriction we place on the searchers is that only the most recently placed searchers may be removed.   If a searcher is placed on the fugitive then he is captured and the searchers win,  otherwise the fugitive wins.
The goal is to determine the minimum number of searchers required to capture the fugitive.  For simplicity we assume that each searcher move consists of either placing or removing one searcher and observe that this does not affect the minimum number of searchers required to capture the fugitive.
The variants we are primarily interested in are whether the searchers use a monotone or a non-monotone strategy, whether the fugitive is visible or invisible, and whether or not the fugitive must stay within the same strongly connected component when he is moving.   As our fundamental definitions are dependent on these latter two options, we define four \emph{game variants}: $\ivar, \iscvar, \vvar, \vscvar$, corresponding to the visibility of the fugitive and whether he is constrained to moving within strongly connected components, and parameterize our definitions by these variants.

Let us fix a digraph $G$.  A position in a LIFO-search on $G$ is a pair $(X,R)$ where $X \in V(G)^*$ and $R$ is a (possibly empty) induced subgraph of $G \setminus \letters{X}$.  Intuitively $X$ represents the position and ordered placement of the searchers and $R$ represents the part of $G$ that the fugitive can reach (in the visible case) or the set of vertices where he might possibly be located (in the invisible case).
We say a position $(X,R)$ is an \emph{$\ivar$-position} if $R$ is successor-closed;  an \emph{$\iscvar$-position} if it is a union of strongly connected components of $G \setminus \letters{X}$; a \emph{$\vvar$-position} if $R$ is successor-closed and has a unique initial component; and a \emph{$\vscvar$-position} if $R$ is a strongly connected component of $G \setminus \letters{X}$.
   
To reflect how the game transitions to a new position during a round of the game we say, for $\var \in \{\ivar,\iscvar,\vvar,\vscvar\}$, a $\var$-position $(X',R')$ is a \emph{$\var$-successor} of $(X,R)$ if
either $X \preceq X'$ or $X' \preceq X$, with $|\letters{X}\Delta\letters{X'}| = 1$, and 
\begin{itemize}
\item (for $\var \in \{\ivar,\vvar\}$) For every $v' \in V(R')$ there is a $v \in V(R)$ and a directed path in $G \setminus (\letters{X} \cap \letters{X'})$ from $v$ to $v'$, or
\item (for $\var \in \{\iscvar,\vscvar\}$) For every $v' \in V(R')$ there is a $v \in V(R)$ such that $v$ and $v'$ are contained in the same strongly connected component of $G \setminus (\letters{X} \cap \letters{X'})$.
\end{itemize}
Ideally we would like to assume games start from $(\epsilon, G)$, however in the visible variants of the game this might not be a legitimate position.  Thus, for $\var \in \{\vvar,\vscvar\}$, if $(\epsilon, G)$ is not a $\var$-position we include it as a special case, and set as its $\var$-successors all $\var$-positions of the form $(\epsilon, R)$.  
We observe that in all variants, the successor relation is monotone in the sense that if $(X,R)$ and $(X,S)$ are positions with $S \subseteq R$ and $(X',S')$ is a successor of $(X,S)$, then there is a successor $(X',R')$ of $(X,R)$ with $S' \subseteq R'$.


For $\var \in \{\ivar,\iscvar,\vvar,\vscvar\}$, a \emph{($\var$-LIFO-)search} in a digraph $G$ from $\var$-position $(X,R)$ is a (finite or infinite) sequence of $\var$-positions $(X,R) =(X_0,R_0),$ $(X_1,R_1),\ldots$ where for all $i \geq 0$, $(X_{i+1},R_{i+1})$ is a $\var$-successor of $(X_i,R_i)$.   A LIFO-search is \emph{complete} if 
either $R_n = \emptyset$ for some $n$, or it is infinite.  We observe that if $R_n = \emptyset$, then $R_{n'} = \emptyset$ for all $n' \geq n$. 

We say a complete LIFO-search is \emph{winning for the searchers} if $R_n = \emptyset$ for some $n$, otherwise it is winning for the fugitive.   A complete LIFO-search from $(\epsilon, G)$ is \emph{monotone} if $R_{i+1} \subseteq R_i$ for all $i$; it is \emph{searcher-stationary} if $X_{i} \preceq X_{i+1}$ for all $i$ where $R_i \neq \emptyset$; and it \emph{uses at most $k$ searchers} if $|X_i| \leq k$ for all $i$.  

Whilst a complete LIFO-search from $(\epsilon, G)$ describes a single run of the game, we are more interested in the cases where one of the players (particularly the searchers) can always force a win, no matter what the other player chooses to do.  For this, we introduce the notion of a strategy.  For $\var \in \{\ivar,\iscvar,\vvar,\vscvar\}$, a \emph{(searcher) $\var$-strategy} is a (partial\footnote{A strategy need only be defined for all positions $(X,R)$ that can be reached from $(\epsilon, G)$ in a LIFO-search consistent with the strategy.  However, as this definition is somewhat circular, we assume strategies are total.}) function $\sigma$ from the set of all $\var$-positions to $V(G)^*$ such that for all $(X,R)$, $\sigma(X,R)$ is the first component of a $\var$-successor of $(X,R)$; so with the possible exception of $(X,R) = (\epsilon, G)$, either $\sigma(X,R) \preceq X$ or $X \preceq \sigma(X,R)$.  A $\var$-LIFO-search $(X_0,R_0),(X_1,R_1),\ldots$ is \emph{consistent} with a $\var$-strategy $\sigma$ if $X_{i+1} = \sigma(X_i,R_i)$ for all $i \geq 0$.  A strategy $\sigma$ is winning from $(X,R)$ if all complete LIFO-searches from $(X,R)$ consistent with $\sigma$ are winning for the searchers.  Likewise, a strategy is monotone (searcher-stationary, uses at most $k$ searchers) if all consistent complete LIFO-searches from $(\epsilon, G)$ are monotone (searcher-stationary, use at most $k$ searchers respectively).  We say $k$ searchers can capture a fugitive on $G$ in the $\var$-game with a (monotone) LIFO-search strategy if there is a (monotone) $\var$-strategy that uses at most $k$ searchers and is winning from $(\epsilon, G)$. 

For $\var \in \{\ivar,\iscvar, \vvar, \vscvar\}$, we define the (monotone) $\var$-LIFO-search number of $G$, $\lifo{\var}(G)$ ($\lifo{\mvar\var}(G)$), as the minimum $k$ for which there is a (monotone) winning $\var$-strategy that uses at most $k$ searchers.  We also define the visible, strongly connected, searcher-stationary search number of $G$, $\sstat{\vscvar}(G)$ as the minimum $k$ for which there is a searcher-stationary winning $\vscvar$-strategy that uses at most $k$ searchers.  

In Section~\ref{sec:min-max} we will also consider fugitive $\var$-strategies: a partial function $\rho$ from $V(G)^* \times \pow{G} \times V(G)^*$ to induced subgraphs of $G$, defined for $(X,R,X')$ if $(X,R)$ is a $\var$-position and $X'$ is the first component of a $\var$-successor of $(X,R)$.  A LIFO-search $(X_0,R_0),(X_1,R_1),\ldots$ is \emph{consistent} with a fugitive $\var$-strategy $\rho$ if $R_{i+1} = \rho(X_i,R_i,X_{i+1})$ for all $i \geq 0$, and a fugitive strategy is winning if all consistent complete LIFO-searches are winning for the fugitive.   In this section, a strategy will always refer to a searcher strategy.



\subsection{Relating the digraph searching parameters}
We observe that in all game variants, a strategy that is winning from $(X,R)$ can be used to define a strategy that is winning from $(X,R')$ for any $R' \subseteq R$: the searchers can play as if the fugitive is located in the larger space; and from the monotonicity of the successor relation, the assumption that the actual set of locations of the fugitive is a subset of the assumed set of locations remains invariant.  
One consequence is that a winning strategy on $G$ defines a winning strategy on any subgraph of $G$, so the search numbers we have defined are monotone with respect to the subgraph relation.

\begin{proposition}\label{prop:subgraph}
Let $G$ be a digraph and $G'$ a subgraph of $G$.  Then:
\begin{itemize}
\item $\sstat{\vscvar}(G')\leq \sstat{\vscvar}(G)$, and
\item $\lifo{\var}(G') \leq \lifo{\var}(G)$ for $\var \in \{\ivar,\iscvar, \vvar, \vscvar, \mvar\ivar,\mvar\iscvar, \mvar\vvar, \mvar\vscvar\}$.
\end{itemize}
\end{proposition}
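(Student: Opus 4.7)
The plan is to construct, for each variant, a winning strategy on $G'$ from a winning strategy on $G$ using the same number of searchers and preserving the relevant qualifiers (monotonicity, searcher-stationarity). The key structural fact I would rely on is that, since $G' \subseteq G$, every directed path in $G' \setminus X$ is also a directed path in $G \setminus X$, and every strongly connected subset of $G' \setminus X$ is contained in a strongly connected subset of $G \setminus X$; hence $\var$-successors in $G'$ lift to $\var$-successors in $G$.

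Given a winning $\var$-strategy $\sigma$ on $G$, I would define $\sigma'$ on $G'$ by a shadow simulation. Inductively, alongside a $G'$-play $(X_0,R_0),(X_1,R_1),\ldots$ consistent with $\sigma'$, I maintain a shadow $G$-play $(X_0,\widehat{R}_0),(X_1,\widehat{R}_1),\ldots$ consistent with $\sigma$, with the invariant $V(R_j)\subseteq V(\widehat{R}_j)$. I set $\widehat{R}_0 = V(G)$ in the invisible cases (and invoke the special successor rule at $(\epsilon,G)$ in the visible cases to pick an initial component of $G$ containing $R_0$), let $\sigma'$ play $X_{i+1} = \sigma(X_i, \widehat{R}_i)$, and once the $G'$-fugitive reveals $R_{i+1}$ I take $\widehat{R}_{i+1}$ to be the smallest $\var$-position of $G$ containing $R_{i+1}$ --- the forward-reachability closure in $G\setminus\letters{X_{i+1}}$ for $\var\in\{\ivar,\vvar\}$, and the union of strongly connected components of $G\setminus\letters{X_{i+1}}$ meeting $R_{i+1}$ for $\var\in\{\iscvar,\vscvar\}$. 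The structural fact ensures $\widehat{R}_{i+1}$ is a valid $\var$-successor of $(X_i,\widehat{R}_i)$, preserving the invariant. Since $\sigma$ is winning on $G$, the shadow terminates with $\widehat{R}_n = \emptyset$, forcing $R_n = \emptyset$, so $\sigma'$ wins on $G'$; this also covers the $\sstat{\vscvar}$ case verbatim.

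The searcher count and searcher-stationarity transfer trivially since the $X_i$'s coincide in the two plays. The main obstacle I foresee is preserving monotonicity: from $\sigma$ monotone on $G$ I only obtain $\widehat{R}_{i+1}\subseteq\widehat{R}_i$, not $R_{i+1}\subseteq R_i$ directly. I plan to handle this by inspecting the searcher-removal step in which $v$ is removed: monotonicity of $\sigma$ forces $\widehat{R}_i$ to be closed under reachability in $G\setminus\letters{X_{i+1}}$, so in particular $\widehat{R}_i$ has no edge in $G$ to $v$ nor to any vertex of $V(G)\setminus(\widehat{R}_i\cup\letters{X_{i+1}})$. Since $R_i\subseteq\widehat{R}_i$ and $E(G')\subseteq E(G)$, the same holds in $G'$, and a short induction on BFS distance from $R_i$ in $G'\setminus\letters{X_{i+1}}$ --- using that $R_i$ is successor-closed in $G'\setminus\letters{X_i}$ to rule out first-level escapes to $V(G')\setminus\letters{X_i}\setminus R_i$, and the monotonicity conclusion to rule out an edge into $v$ --- shows no new vertex becomes reachable from $R_i$. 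Hence every valid $G'$-successor $R_{i+1}$ satisfies $R_{i+1}\subseteq R_i$, yielding monotonicity of $\sigma'$ and completing the proof.
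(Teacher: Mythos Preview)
Your approach matches the paper's: the proposition is stated there as an immediate consequence of the preceding observation that a winning strategy from $(X,R)$ yields one from $(X,R')$ whenever $R'\subseteq R$ (via the monotonicity of the successor relation), and your shadow simulation is exactly this argument spelled out across the two graphs $G'\subseteq G$. You actually go further than the paper by explicitly justifying the preservation of monotonicity under the simulation, a point the paper leaves implicit.
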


Another consequence is that a winning strategy in the invisible fugitive variant defines a winning strategy when the fugitive is visible; and a winning strategy when the fugitive is not constrained to moving within strongly connected components defines a winning strategy when he is.  This corresponds to our intuition of the fugitive being more (or less) restricted.  Also, in all game variants, a monotone winning strategy is clearly a winning strategy, and because a searcher-stationary LIFO-search is monotone, a winning searcher-stationary strategy is a monotone winning strategy.  These observations yield several inequalities between the search numbers defined above.  For example $\lifo{\vscvar}(G) \leq \lifo{\mvar\ivar}(G)$ as any winning monotone $\ivar$-strategy is also a winning $\vscvar$-strategy.  The full set of these relationships is shown in a Hasse diagram in Figure~\ref{fig:rel}, with the larger measures towards the top.

\begin{figure}
\[\xymatrix@R=20pt@C=20pt{
&\lifo{\mvar\ivar}(G)\ar@{-}[dl]\ar@{-}[d]\ar@{-}[dr]\\
\lifo{\ivar}(G)\ar@{-}[d]\ar@{-}[dr]&\lifo{\mvar\vvar}(G)\ar@{-}[dl]\ar@{-}[dr]&\lifo{\mvar\iscvar}(G)\ar@{-}[d]\ar@{-}[dl]&\sstat{\vscvar}(G)\ar@{-}[dl]\\
\lifo{\vvar}(G)\ar@{-}[dr]&\lifo{\iscvar}(G)\ar@{-}[d]&\lifo{\mvar\vscvar}(G)\ar@{-}[dl]\\
&\lifo{\vscvar}(G)
 }\]
\caption{Trivial relations between digraph searching parameters}\label{fig:rel}
\end{figure}
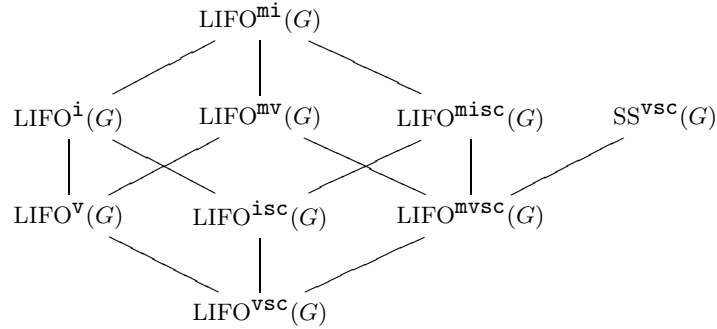

The main result of this section is that all these digraph parameters are equal to one more than cycle-rank.
\begin{theorem}\label{thm:game-char}
For any digraph $G$:
\[\begin{array}{rl}
	 1+ \cycr{G} &= \lifo{\mvar\ivar}(G) = \lifo{\ivar}(G) = \lifo{\mvar\iscvar}(G) = \lifo{\iscvar}(G) \\
	&= \lifo{\mvar\vvar}(G) = \lifo{\vvar}(G) = \lifo{\mvar\vscvar}(G) = \lifo{\vscvar}(G)\\
	&=\sstat{\vscvar}(G).
\end{array}\]
\end{theorem}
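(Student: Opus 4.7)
The plan is to prove the chain of equalities by establishing three bounds that, together with Proposition~\ref{prop:subgraph} and the relations in Figure~\ref{fig:rel}, sandwich every parameter at $1 + \cycr{G}$.  Since $\lifo{\mvar\ivar}$ and $\sstat{\vscvar}$ are the two maximal elements of the Hasse diagram and $\lifo{\vscvar}$ is its unique minimum, it suffices to show
\[ \lifo{\mvar\ivar}(G) \leq 1+\cycr{G}, \quad \sstat{\vscvar}(G) \leq 1+\cycr{G}, \quad \lifo{\vscvar}(G) \geq 1+\cycr{G}. \]

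For the two upper bounds I would induct on $\cycr{G}$ following its recursive definition.  In the acyclic base case one searcher suffices: for $\sstat{\vscvar}$ the visible fugitive sits on a single vertex (every SCC is a singleton), and for $\lifo{\mvar\ivar}$ a lone searcher can be placed and removed along a topological order, monotonically pushing the belief set forward through the DAG.  For strongly connected $G$, pick $v$ achieving $\cycr{G \setminus v} = \cycr{G} - 1$, place a searcher on $v$ (which occupies the bottom of the LIFO stack and so is never removed during the recursion), and apply the inductive strategy to $G \setminus v$: in the $\vscvar$ case the visible fugitive resides in some SCC of $G \setminus v$ on which we recurse, while in the $\mvar\ivar$ case the belief set is all of $G \setminus v$ and we recurse there directly.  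When $G$ has multiple SCCs we handle them one at a time; for the invisible game these must be processed in topological order starting with an initial component, ensuring that after a finished component is removed from the belief set the result remains successor-closed (an initial component has no in-edges from the rest of $G$).

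For the lower bound $\lifo{\vscvar}(G) \geq 1 + \cycr{G}$ I would induct on $|V(G)|$.  Let $\sigma$ be any winning $\vscvar$-strategy on $G$ with at most $k$ searchers.  If $G$ is acyclic then $\cycr{G} = 0 \leq k - 1$; if $G$ has multiple SCCs then Proposition~\ref{prop:subgraph} gives a winning strategy on each component $C$ with at most $k$ searchers, and the induction hypothesis yields $\cycr{C} \leq k - 1$, so $\cycr{G} = \max_C \cycr{C} \leq k - 1$.  The essential case is $G$ strongly connected; then $(\epsilon, G)$ is a $\vscvar$-position and $\sigma$ must make a placement, so set $v_0 = \sigma(\epsilon, G)$.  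The crux is the claim that $\sigma$ never removes $v_0$ on any play consistent with it.  Suppose for contradiction $\sigma(v_0, R^*) = \epsilon$ for some SCC $R^*$ of $G \setminus v_0$; since $G$ is strongly connected it is its own unique SCC, so the only $\vscvar$-successor of $(v_0, R^*)$ under this removal is $(\epsilon, G)$.  By determinism of $\sigma$ the game then re-places $v_0$, the fugitive may again select the same SCC $R^*$, and the play returns to $(v_0, R^*)$, yielding an infinite play and contradicting that $\sigma$ wins.  Hence $v_0$ is pinned at the bottom of the LIFO stack throughout every consistent play, and stripping the $v_0$-prefix from every reachable position gives a well-defined $\vscvar$-strategy $\sigma'$ on $G \setminus v_0$ using at most $k - 1$ searchers, winning by a direct bijection of plays.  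So $\lifo{\vscvar}(G \setminus v_0) \leq k - 1$, the induction hypothesis gives $\cycr{G \setminus v_0} \leq k - 2$, and $\cycr{G} \leq 1 + \cycr{G \setminus v_0} \leq k - 1$.

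The main obstacle is the ``$v_0$ is never removed'' claim in the strongly connected case of the lower bound: that is where LIFO semantics, determinism of the strategy, and the strong connectedness of $G$ combine to force a periodic play if $\sigma$ ever pops $v_0$.  Once this pinning is secured the projection $\sigma \mapsto \sigma'$ is mechanical, and Proposition~\ref{prop:subgraph} together with the Hasse diagram of Figure~\ref{fig:rel} collapses every parameter in the theorem to the common value $1 + \cycr{G}$.
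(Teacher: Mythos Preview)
Your argument is correct and rests on the same key observation as the paper, but the cycle of inequalities is routed differently.  The paper establishes
\[
\lifo{\vscvar}(G) \;\geq\; \sstat{\vscvar}(G) \;\geq\; 1+\cycr{G} \;\geq\; \lifo{\mvar\ivar}(G),
\]
proving as a standalone lemma that \emph{every} winning $\vscvar$-strategy is already searcher-stationary: at the first pop in any consistent play the position two steps back is recovered exactly, yielding a two-step cycle.  You instead prove only the bottom-of-stack instance of this (popping $v_0$ returns to $(\epsilon,G)$ because $G$ is strongly connected) and let the induction on $G\setminus v_0$ absorb the higher levels; this gives $\lifo{\vscvar}(G)\geq 1+\cycr{G}$ directly.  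The cost is that you must separately supply the upper bound $\sstat{\vscvar}(G)\leq 1+\cycr{G}$, whereas the paper gets it for free from its first lemma via $\sstat{\vscvar}\leq\lifo{\vscvar}\leq\lifo{\mvar\ivar}\leq 1+\cycr{G}$.  The paper's organization is a little more economical and also isolates the interesting fact $\lifo{\vscvar}=\sstat{\vscvar}$ as a lemma in its own right; your packaging folds this into the recursion.

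One small technical point: for the two upper bounds you induct on $\cycr{G}$, but in the multiple-SCC case a component $H$ may satisfy $\cycr{H}=\cycr{G}$, so the hypothesis does not apply to $H$.  The paper inducts on $|V(G)|$, which fixes this immediately (and is what you already do for the lower bound).
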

\begin{proof}
From the above observations, to prove Theorem~\ref{thm:game-char} it is sufficient to prove the following three inequalities:
\begin{enumerate}[(1) ]
\item $\lifo{\vscvar}(G) \geq \sstat{\vscvar}(G)$,
\item $\sstat{\vscvar}(G) \geq 1 + \cycr{G}$, and
\item $1 + \cycr{G} \geq \lifo{\mvar\ivar}(G)$.
\end{enumerate}
\end{proof}
These are established with the following series of lemmas.
\begin{lemma}  
For any digraph $G$, $\lifo{\vscvar}(G) \geq \sstat{\vscvar}(G)$.
\end{lemma}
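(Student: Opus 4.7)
The plan is to proceed by induction on $|V(G)|$, converting any winning $\vscvar$-LIFO-strategy $\sigma$ using $k$ searchers into a searcher-stationary strategy using the same number. The guiding intuition is that in the $\vscvar$-variant two kinds of moves are essentially wasted: removing a searcher can only (weakly) grow the fugitive's strongly connected component, and placing a searcher outside the fugitive's current SCC changes nothing. A winning LIFO-strategy should therefore always be reorganisable into one that only ever adds searchers inside the current fugitive component, which is precisely a searcher-stationary strategy.

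The base case $V(G)=\emptyset$ is trivial. For the inductive step I would split into two cases. If $G$ is not strongly connected, the fugitive's opening move chooses an SCC $C$ with $|V(C)|<|V(G)|$; a projection argument (below) gives a winning $\vscvar$-LIFO-strategy on $G[V(C)]$ using at most $k$ searchers, the inductive hypothesis then yields $\sstat{\vscvar}(G[V(C)])\leq k$, and I would splice the per-SCC strategies together. If $G$ is strongly connected, then $\sigma$'s first move must be a placement $\sigma(\epsilon,G)=v$ (the stack is empty), and for each SCC $R'$ of $G\setminus\{v\}$ the continuation of $\sigma$ wins from $(v,R')$ using $\leq k$ total searchers. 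The key reduction is $\lifo{\vscvar}(G[V(R')])\leq k-1$; combined with the inductive hypothesis on the smaller digraph $G[V(R')]$ and prepending the placement of $v$, this produces a searcher-stationary strategy on $G$ using at most $k$ searchers.

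The main obstacle is the reduction $\lifo{\vscvar}(G[V(R')])\leq k-1$, which I would prove by contradiction using determinacy of this finite game. Assuming the contrary, there is a fugitive winning strategy $\rho'$ on $G[V(R')]$ against every strategy using $\leq k-1$ searchers. I would lift $\rho'$ to a fugitive strategy $\rho$ on $G$ starting from $(v,R')$: whenever $\sigma$ moves inside $V(R')$ the lifted fugitive responds per $\rho'$, and otherwise it takes the forced response dictated by the successor relation. Two observations make the lifting faithful. First, since $R'$ is an SCC of $G\setminus\{v\}$, the SCCs of $G\setminus\letters{v\cdot Y}$ contained in $R'$ are precisely the SCCs of $G[V(R')]\setminus\letters{Y\cap V(R')^*}$, so any move of $\sigma$ outside $V(R')\cup\{v\}$ leaves the within-$R'$ SCC structure untouched and keeps the fugitive in $R'$. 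Second, $v$ is never removed from the stack along any play of $\sigma$ from $(\epsilon,G)$: removing $v$ would return the game to $(\epsilon,G)$ (since $G$ is strongly connected), and by positionality $\sigma$ would cycle $(\epsilon,G)\to(v,R')\to(\epsilon,G)\to\cdots$ against a fugitive that keeps choosing $R'$, contradicting that $\sigma$ wins. Therefore $\sigma$'s within-$V(R')$ stack always has size at most $k-1$, so projecting the play $\sigma$-vs-$\rho$ onto $G[V(R')]$ produces an instance of $\rho'$ against at most $k-1$ searchers, which $\rho'$ by assumption defeats; this yields an infinite play of $\sigma$ against $\rho$ on $G$, contradicting $\sigma$ being winning. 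The same projection technique, without reserving a slot for $v$, handles the non-strongly-connected case and closes the induction.
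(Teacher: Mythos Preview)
Your argument is correct, but it takes a substantially longer route than the paper's.  The paper proves something stronger and simpler: \emph{every} winning $\vscvar$-strategy is already searcher-stationary, so no conversion is necessary.  The key observation is exactly the cycling phenomenon you invoke locally for the bottom searcher $v$.  If at some step a searcher is removed (i.e.\ $X_{j+1}\prec X_j$ with $R_j\neq\emptyset$), then because only one searcher moves per step and the game is LIFO, we have $X_{j+1}=X_{j-1}$, and the fugitive's component returns to $R_{j-1}$.  By positionality the play then oscillates between $(X_{j-1},R_{j-1})$ and $(X_j,R_j)$ forever, so the strategy is not winning.  This one paragraph replaces your entire induction.

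What each approach buys: the paper's argument is a two-line structural fact about the $\vscvar$ game and immediately gives $\lifo{\vscvar}(G)=\sstat{\vscvar}(G)$ rather than just the inequality.  Your inductive reduction is heavier machinery---it requires determinacy, a projection lemma relating SCCs of $G\setminus\letters{vY}$ inside $R'$ to SCCs of $G[V(R')]$, and a lifting of fugitive strategies---but all of these pieces are correct as you state them, and the technique (project a play onto a component, using that outside moves are inert) is reusable in settings where the direct cycling argument might not apply.  In particular, note that the cycling argument you give for the first-placed searcher $v$ works verbatim for the most-recently-placed searcher at any point in the play; recognising this would have collapsed your proof to the paper's.
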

\begin{proof}
We show that if a $\vscvar$-strategy is not searcher-stationary then it is not a winning strategy from $(\epsilon, G)$.  The result then follows as this implies every winning $\vscvar$-strategy is searcher-stationary.
Let $\sigma$ be a $\vscvar$-strategy, and suppose $(X_0, R_0), (X_1, R_1), \ldots$ is a complete $\vscvar$-LIFO-search from $(X_0,R_0) = (\epsilon, G)$ consistent with $\sigma$ which is not searcher-stationary.  
Let $j$ be the least index such that $X_{j} \succeq X_{j+1}$ and $R_{j} \neq \emptyset$.  As $X_0 = \epsilon$, there exists $i < j$ such that $X_i = X_{j+1}$.  By the minimality of $j$, and the assumption that we only place or remove one searcher in each round, $i = j-1$.   As $X_{j-1} \preceq X_{j}$, $R_{j} \subseteq R_{j-1}$, and as $X_{j+1} \preceq X_{j}$, $R_{j} \subseteq R_{j+1}$.  As $R_{j} \neq \emptyset$, it follows that $R_{j-1}$ and $R_{j+1}$ are the same strongly connected component of $G \setminus \letters{X_{j-1}}$.  Thus $(X_{j-1},R_{j-1})$ is a $\vscvar$-successor of $(X_{j},R_{j})$.  As $\sigma(X_j,R_j) = X_{j+1} = X_{j-1}$, it follows that $(X_0, R_0), (X_1, R_1), \ldots (X_{j-1},R_{j-1}),(X_{j},R_{j}),(X_{j-1},R_{j-1}),(X_{j},R_{j}), \ldots$ is an infinite, and hence complete,  $\vscvar$-LIFO-search (from $(\epsilon, G)$) consistent with $\sigma$.  As $R_i \neq \emptyset$ for all $i \geq 0$, the LIFO-search is not winning for the searchers.  Thus $\sigma$ is not a winning strategy.
\end{proof}

\begin{lemma}  
For any digraph $G$, $\sstat{\vscvar}(G) \geq 1+\cycr{G}$.
\end{lemma}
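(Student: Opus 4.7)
The plan is to prove the bound by induction on $|V(G)|$, mirroring the three-case definition of cycle-rank. Fix a winning searcher-stationary $\vscvar$-strategy $\sigma$ on $G$ using at most $k$ searchers; the goal is to deduce $\cycr{G} \leq k-1$.

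If $G$ is acyclic, then $\cycr{G} = 0$ and, since $G$ is non-empty, $\sigma$ must place at least one searcher, so $k \geq 1$ and the bound holds. If $G$ is not strongly connected but has a cycle, I would handle each strongly connected component $C$ of $G$ separately: the fugitive may respond to the special initial position $(\epsilon, G)$ by moving to $(\epsilon, C)$, and from the continuation of $\sigma$ I would extract a searcher-stationary winning $\vscvar$-strategy $\sigma_C$ on the subgraph $C$, by silently skipping any placement that $\sigma$ makes on a vertex outside $C$. Since $|V(C)| < |V(G)|$, the induction hypothesis gives $\cycr{C} \leq k-1$ for each such $C$, whence $\cycr{G} = \max_C \cycr{C} \leq k-1$.

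The core case is $G$ strongly connected and non-acyclic. Then $(\epsilon, G)$ is itself a $\vscvar$-position, and the successor condition $|\letters{\epsilon} \Delta \letters{\sigma(\epsilon, G)}| = 1$ forces $\sigma(\epsilon, G) = v$ for some single vertex $v \in V(G)$. For each SCC $R'$ of $G \setminus \{v\}$ the fugitive may respond with $(v, R')$; because $\sigma$ is searcher-stationary, the searcher on $v$ stays on the board for the rest of the play, so the tail of the play uses at most $k-1$ further searchers. The same compression yields a searcher-stationary winning $\vscvar$-strategy on the subgraph $R'$ using at most $k-1$ searchers, and induction gives $\cycr{R'} \leq k-2$. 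Therefore $\cycr{G \setminus \{v\}} = \max_{R'} \cycr{R'} \leq k-2$, and the strongly-connected clause of the cycle-rank definition yields $\cycr{G} \leq 1 + (k-2) = k-1$.

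The main obstacle is justifying the compression step. The key observation is that if $(X, S)$ is a $\vscvar$-position with $S$ contained in a strongly connected component $D$ of $G \setminus \letters{X}$ and $\sigma$ places a searcher $u \notin D$, then the unique $\vscvar$-successor is $(X \cdot u, S)$: removing $u$ from $G \setminus \letters{X}$ neither splits $D$ nor merges it with anything, so the SCC containing $S$ is preserved. Consequently, such outside placements leave the fugitive's region $R$ unchanged, and they may be silently excised; the surviving placements produce a sequence of positions that projects to a searcher-stationary winning LIFO-search in the subgraph. Since $\sigma$ uses only finitely many searchers and must eventually reduce $R$ to empty, at least one placement has to occur inside the relevant SCC, so the construction yields a total strategy with the required properties, and one can then extend it arbitrarily to unreachable $\vscvar$-positions to satisfy the formal totality requirement.
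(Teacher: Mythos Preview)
Your proof is correct and follows essentially the same route as the paper: induction on $|V(G)|$, handling the non-strongly-connected case by passing to each strongly connected component (the paper simply invokes the earlier subgraph-monotonicity Proposition~\ref{prop:subgraph} here rather than your explicit compression), and handling the strongly connected case by observing that the first move $\sigma(\epsilon,G)=v$ of a searcher-stationary strategy reduces the problem to $G\setminus\{v\}$ with one fewer searcher. Your discussion of the compression step makes explicit what the paper asserts in one line (``it follows that $\sstat{\vscvar}(G\setminus\{v_0\}) = \sstat{\vscvar}(G)-1$''), but the argument is the same.
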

\begin{proof}
We prove this by induction on $|V(G)|$.

If $|V(G)|=1$, then $\sstat{\vscvar}(G) = 1 = 1 + \cycr{G}$. 

Now suppose $\sstat{\vscvar}(G') \geq 1+\cycr{G'}$ for all digraphs $G'$ with $|V(G')| < |V(G)|$.
We first consider the case when $G$ is not strongly connected.  From Proposition~\ref{prop:subgraph}, $\sstat{\vscvar}(G) \geq \max_H \sstat{\vscvar}(H)$ where the maximum is taken over all strongly connected components $H$ of $G$.  As $G$ is not strongly connected, $|V(H)| < |V(G)|$ for all strongly connected components $H$ of $G$.  Therefore, by the induction hypothesis
\begin{eqnarray*}
\sstat{\vscvar}(G) &\geq& \max_H \:\sstat{\vscvar}(H)\\
&\geq& \max_H \:(1+ \cycr{H})\\
&=& 1 + \cycr{G}.
\end{eqnarray*}
Now suppose $G$ is strongly connected.  Let $\sigma$ be a winning searcher-stationary $\vscvar$-strategy which uses $\sstat{\vscvar}(G)$ searchers.  As $(\epsilon, G)$ is a legitimate $\vscvar$-position, if $(X,R)$ is a  $\vscvar$-successor of $(\epsilon, G)$ then $|X| = 1$.  Thus $|\sigma(\epsilon, G)| = 1$.  Let $\sigma(\epsilon,G) = v_0$.  As $\sigma$ is a searcher-stationary strategy which uses the minimal number of searchers, it follows that $\sstat{\vscvar}(G\setminus \{v_0\}) = \sstat{\vscvar}(G) - 1$.  Thus, by the induction hypothesis,
\begin{eqnarray*}
\sstat{\vscvar}(G) & = & \sstat{\vscvar}(G\setminus \{v_0\}) + 1\\
&\geq& (1+ \cycr{G\setminus \{v_0\}})+1\\
&\geq & (1 + \min_{v \in V(G)}\cycr{G\setminus \{v\}})+1\\
& = & 1 + \cycr{G}.
\end{eqnarray*}
\end{proof}

\begin{lemma}  
For any digraph $G$, $1+\cycr{G} \geq \lifo{\mvar\ivar}(G)$.
\end{lemma}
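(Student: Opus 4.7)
The plan is to prove by strong induction on $|V(G)|$ the stronger statement: there is a monotone $\ivar$-LIFO-search strategy on $G$ using $1 + \cycr{G}$ searchers that wins from $(\epsilon, G)$. The base case $|V(G)|=1$ is immediate: place the single vertex to win with one searcher.

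For the inductive step I split on whether $G$ is strongly connected. \emph{If $G$ is strongly connected,} choose $v_0$ achieving $\cycr{G\setminus\{v_0\}} = \cycr{G}-1$ (guaranteed by the definition of cycle-rank). The strategy places $v_0$ first and then simulates the inductively given monotone strategy on $G\setminus\{v_0\}$, with $(X',R')$ in that smaller game corresponding to $(v_0 X', R')$ in $G$. The simulation is valid because $\letters{v_0 X'}\cap\letters{v_0 X''} = \{v_0\}\cup(\letters{X'}\cap\letters{X''})$, so the $G$-successor relation on such positions reduces exactly to the $(G\setminus\{v_0\})$-successor relation. This uses $1 + (1 + \cycr{G\setminus\{v_0\}}) = 1 + \cycr{G}$ searchers in total.

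\emph{If $G$ is not strongly connected,} let $C$ be an initial strongly connected component; then $|V(C)|, |V(G\setminus C)|<|V(G)|$ and $\cycr{G}=\max(\cycr{C},\cycr{G\setminus C})$. The strategy runs in three phases: (1) simulate the inductively given monotone strategy on $C$; (2) remove the $C$-searchers in reverse LIFO order; (3) simulate the inductively given monotone strategy on $G\setminus C$ starting from $(\epsilon,G\setminus C)$. The critical invariant throughout Phase~1 is $R_{\mathrm{actual}} = (G\setminus C)\cup R_C$, where $R_C$ is the $R$-component of the simulated $C$-game. This holds because $C$ is initial, which gives (a) any directed path in $G$ between two vertices of $C$ lies entirely in $C$, so $R_{\mathrm{actual}}\cap C$ evolves by the $C$-game rules, and (b) since no searcher is placed outside $C$ during Phase~1 and no edge enters $C$ from $G\setminus C$, every vertex of $G\setminus C$ stays reachable from itself in the searcher-free subgraph, so $R_{\mathrm{actual}}\supseteq G\setminus C$. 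When $R_C=\emptyset$ we are at $R_{\mathrm{actual}}=G\setminus C$. Phase~2 preserves $R=G\setminus C$ because paths from $R\subseteq G\setminus C$ cannot cross into $C$, so removing any $C$-searcher cannot pull $R$ back into $C$; monotonicity is therefore preserved across the phase boundary. Phase~3 then finishes the capture inside $G\setminus C$. The peak number of searchers in use at any time is $\max(1+\cycr{C},1+\cycr{G\setminus C})=1+\cycr{G}$.

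The main obstacle is verifying the Phase~1 invariant with full rigor: one must carefully unpack the $\ivar$-position and $\ivar$-successor definitions to see that the simulated $C$-game's transitions on $R_C$ match exactly the induced behaviour on $R_{\mathrm{actual}}\cap C$ in the $G$-game, and that $R_{\mathrm{actual}}\cap(G\setminus C)$ remains $G\setminus C$ throughout. Both facts reduce to the initial-SCC property for $C$. Once this is in place, monotonicity and LIFO-compliance within each phase are inherited directly from the inductively given strategies, and across the Phase~1/Phase~2 boundary they follow from the same initial-SCC observation.
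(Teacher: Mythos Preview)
Your proof is correct and follows essentially the same approach as the paper: induction on $|V(G)|$, with the strongly connected case handled identically (place a searcher on a vertex minimizing residual cycle-rank, then recurse), and the non-strongly-connected case handled by clearing components in topological order with all searchers removed between components. The only organizational difference is that the paper fixes a full topological ordering $H_1,\ldots,H_n$ of the strongly connected components and describes the composite strategy explicitly, whereas you peel off a single initial component $C$ and invoke the induction hypothesis on $G\setminus C$; unfolding your recursion recovers the paper's sequence, so the two arguments are equivalent.
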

\begin{proof}
We also prove this by induction on $|V(G)|$.

If $|V(G)| = 1$, then $1 + \cycr{G} = 1 = \lifo{\mvar\ivar}(G)$.

Now suppose $1+\cycr{G'} \geq \lifo{\mvar\ivar}(G')$ for all digraphs $G'$ with $|V(G')| < |V(G)|$.
First we consider the case when $G$ is not strongly connected.  As $|V(H)| < |V(G)|$ for each strongly connected component $H$, by the inductive hypothesis,  there is a monotone $\ivar$-strategy, $\sigma_H$, which captures a fugitive using at most $1+\cycr{H}$ searchers.  From the definition of cycle-rank, for each strongly connected component $H$ of $G$, $\cycr{G} \geq \cycr{H}$, thus $\sigma_H$ uses at most $1+\cycr{G}$ searchers.  We define a monotone $\ivar$-strategy which captures a fugitive on $G$ with at most $1+\cycr{G}$ searchers as follows.  Intuitively, we search the strongly connected components of $G$ in topological order using the monotone strategies $\sigma_H$.  More precisely, let $H_1, H_2, \ldots, H_n$ be an ordering of the strongly connected components of $G$ such that if there is an edge from $H_i$ to $H_j$ then $i < j$.  We define $\sigma$ as follows.  
\begin{itemize}
	\item $\sigma(\epsilon, G) = \sigma_{H_1}(\epsilon, H_1)$,
	\item For $1 \leq i$, if $\letters{X} \subseteq H_i$ and $R = R' \cup \bigcup_{j=i+1}^n H_j$ where $\emptyset \neq R' \subseteq H_i$, $\sigma(X,R) = \sigma_{H_i}(X,R')$,
	\item For $1 \leq i < n$, if $\emptyset \neq \letters{X} \subseteq H_i$ and $R = \bigcup_{j=i+1}^n H_j$ then $\sigma(X,R) = X'$ where $X'$ is the maximal proper prefix of $X$.
\end{itemize}
From the definition of $\ivar$-successors and the ordering of the strongly connected components if $(X_0, R_0), \ldots (X_n, R_n)$ is an $\ivar$-LIFO-search on $G$ where $\letters{X_n} \subseteq H_i$ and $\bigcup_{j>i} H_j \subseteq R_{n-1}   \subseteq \bigcup_{j\geq i} H_j$, then $\bigcup_{j>i} H_j \subseteq R_{n}   \subseteq \bigcup_{j\geq i} H_j$.  
It follows (by induction on the length of a LIFO-search) that every LIFO-search from $(\epsilon, G)$ consistent with $\sigma$ can be divided into a sequence of LIFO-searches $\lambda_1, \lambda_2, \ldots, \lambda_n$, where $\lambda_i$ can be viewed as a LIFO-search consistent with $\sigma_{H_i}$ with $\bigcup_{j > i} H_j$ added to the second component of every position.
Thus if each $\sigma_{H_i}$ is monotone, winning and uses at most $1+\cycr{G}$ searchers, then $\sigma$ is also monotone, winning and uses at most $1+\cycr{G}$ searchers.

Now suppose $G$ is strongly connected.  Let $v_0$ be the vertex which minimizes $f(v) = \cycr{G \setminus \{v\}}$.  Let $G' = G \setminus \{v_0\}$, so $\cycr{G} = 1 + \cycr{G'}$. By the induction hypothesis, there exists a winning monotone $\ivar$-strategy $\sigma'$ which uses at most $1+\cycr{G'}$ searchers to capture a fugitive on $G'$.  We define an $\ivar$-strategy $\sigma$ on $G$ which uses at most $2+\cycr{G'} = 1+\cycr{G}$ searchers as follows.  Initially, place (and keep) a searcher on $v_0$, then play the strategy $\sigma'$ on $G \setminus \{v_0\}$.  More precisely, $\sigma(\epsilon, G) = v_0$ and $\sigma(v_0X,R) = v_0\cdot \sigma'(X,R)$.  Clearly any LIFO-search consistent with $\sigma$ can be viewed as a LIFO-search consistent with $\sigma'$ prepended with the position $(\epsilon, G)$ and where the first component of every position is prepended with $v_0$.  Thus if $\sigma'$ is monotone, then $\sigma$ is monotone, and if $\sigma'$ is winning then $\sigma$ is winning.  Thus $\sigma$ is a monotone winning $\ivar$-strategy which uses at most $1+ \cycr{G}$ searchers.
\end{proof}

\subsection{Relation with other graph parameters}
With a characterization of cycle-rank in terms of several graph searching games we can compare it with other digraph measures defined by similar games.  
In particular, the directed pathwidth of a digraph, $\textrm{dpw}(G)$, which can be characterized by an invisble-fugitive graph searching game~\cite{Bar05}, and the DAG-depth, $\textrm{dd}(G)$ which can be characterized by a visible-fugitive, searcher-stationary searching game~\cite{GHKLOR09}.
Whilst the relationships we present here are known~\cite{Gru08,GHKLOR09}, using the game characterizations we obtain a more simple and more intuitive proof.  
\begin{corollary}
For any digraph $G$, 
\( \textrm{dpw}(G) \leq \cycr{G} \leq \textrm{dd}(G)-1.\)
\end{corollary}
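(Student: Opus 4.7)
The plan is to leverage the game-theoretic characterizations directly. Theorem~\ref{thm:game-char} gives $1 + \cycr{G} = \lifo{\ivar}(G) = \sstat{\vscvar}(G)$, while the paragraph preceding the corollary notes that $\textrm{dpw}(G)+1$ equals the searcher-count in Bar\'at's (unrestricted) invisible-fugitive node-search game, and that $\textrm{dd}(G)$ equals the searcher-count in the visible, searcher-stationary $\vvar$-game of \cite{GHKLOR09}. Both inequalities should then reduce to observing that one side of the comparison plays a strictly more constrained game.

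For $\textrm{dpw}(G) \leq \cycr{G}$, I would argue on the searchers' side: the LIFO rule is an extra restriction on the moves available to the searchers, so every winning $\lifo{\ivar}$-strategy is \emph{a fortiori} a winning strategy in the unrestricted invisible-fugitive game that characterizes directed pathwidth. This gives $\textrm{dpw}(G) + 1 \leq \lifo{\ivar}(G) = 1 + \cycr{G}$. For $\cycr{G} \leq \textrm{dd}(G) - 1$, I would argue symmetrically on the fugitive's side: a fugitive confined to moving within searcher-free strongly connected sets is strictly more constrained than one free to traverse any searcher-free directed path (since strong connectivity requires paths in both directions), so every winning searcher-stationary $\vvar$-strategy is also a winning searcher-stationary $\vscvar$-strategy. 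This yields $\sstat{\vscvar}(G) \leq \textrm{dd}(G)$, which by Theorem~\ref{thm:game-char} rearranges to $\cycr{G} \leq \textrm{dd}(G) - 1$.

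Both bounds follow from the same ``more constrained player loses no sooner'' reasoning that underlies the Hasse diagram in Figure~\ref{fig:rel}, so I do not anticipate a real obstacle. The only care needed is to line up the differing off-by-one conventions of the three parameters --- directed pathwidth and cycle-rank are each one less than their respective searcher counts, whereas DAG-depth equals its own --- after which both inequalities collapse to essentially a single line each.
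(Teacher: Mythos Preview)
Your proposal is correct and is precisely the argument the paper intends: the corollary is stated without proof, but the preceding paragraph explicitly sets up the game-theoretic characterizations of $\textrm{dpw}(G)$ and $\textrm{dd}(G)$ and announces that ``using the game characterizations we obtain a more simple and more intuitive proof.'' Your two comparisons---the LIFO rule as an extra constraint on the searchers for the lower bound, and the strongly-connected restriction as an extra constraint on the fugitive for the upper bound---are exactly that intended proof, and your handling of the off-by-one conventions is correct.
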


\section{Obstructions for cycle-rank}\label{sec:min-max}

In this section we consider the dual parameter arising from considering the graph searching games from the fugitive's perspective.  We show that it can be characterized by two types of structural features, akin to the havens and brambles used to dually characterize treewidth~\cite{ST93}.  The first of these is the natural generalization of a shelter from~\cite{GT11}, a structural obstruction shown to be dual to tree-depth.

\begin{definition}
A \emph{strong shelter} of a digraph $G$ is a collection $\mathcal{S}$ of non-empty strongly connected sets of vertices such that for any $S \in \mathcal{S}$ 
\[\bigcap \{S' \st S' \in M_{\mathcal{S}}(S)\}  = \emptyset,\]  where $M_{\mathcal{S}}(S)$ is the $\subseteq$-maximal elements of $\{S' \in \mathcal{S}\st S' \subset S\}$.  
The \emph{thickness} of a shelter $\mathcal{S}$ is the minimal length of a maximal $\subseteq$-chain.
\end{definition}

The second structural obstruction we consider is motivated by the definition of a haven in~\cite{JRST01}, a structural feature dual to directed treewidth.

\begin{definition}
A \emph{LIFO-haven of order $k$} is a function $h$ from $V(G)^{<k}$ to induced subgraphs of $G$ such that:
\begin{enumerate}[(H1) ]
\item $h(X)$ is a non-empty strongly connected component of $G \setminus \letters{X}$, and
\item If $X \preceq Y$ and $|Y| < k$ then $h(Y) \subseteq h(X)$.
\end{enumerate}
\end{definition}

Whilst Adler~\cite{Adl05} has shown that the havens of~\cite{JRST01} do not give an exact min-max characterization of directed treewidth and Safari~\cite{Saf05} has shown that directed versions of havens and brambles give rise to distinct parameters, we show that LIFO-havens and strong shelters both give a tight min-max characterization of cycle-rank.


\begin{theorem}[Min-max theorem for cycle-rank]\label{thm:min-max}
Let $G$ be a digraph and $k$ a positive integer.  The following are equivalent:
\begin{enumerate}[(i) ]
\item $G$ has cycle-rank $<k$, 
\item $G$ has no LIFO-haven of order $> k$, and
\item $G$ has no strong shelter of thickness $> k$.
\end{enumerate}
\end{theorem}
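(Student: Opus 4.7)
The plan is to reduce the three-way equivalence to the game characterization in Theorem~\ref{thm:game-char}, which identifies $1+\cycr{G}$ with both $\lifo{\vscvar}(G)$ and $\sstat{\vscvar}(G)$; consequently (i) is equivalent to $\lifo{\vscvar}(G)\leq k$ and to $\sstat{\vscvar}(G)\leq k$. I then prove (ii) $\Leftrightarrow$ $\lifo{\vscvar}(G)\leq k$ and (iii) $\Leftrightarrow$ $\sstat{\vscvar}(G)\leq k$ separately, by showing that each structural obstruction corresponds exactly to a winning fugitive strategy in the relevant game. Each of these equivalences splits into an easy ``obstruction yields winning fugitive strategy'' direction and a harder converse, which I establish by induction on $|V(G)|$ mirroring the recursive definition of cycle-rank.

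For (ii), given a LIFO-haven $h$ of order $>k$, define a winning fugitive $\vscvar$-strategy by: from position $(X,h(X))$ with $|X|\leq k$, respond to a searcher move to $X'$ by playing $h(X')$. Properties (H1) and (H2) make $(X',h(X'))$ a valid $\vscvar$-successor whether the searcher adds a vertex (so $h(X')\subseteq h(X)$) or removes one (so $h(X)\subseteq h(X')$), and since $h$ is never empty the strategy never loses, hence $\lifo{\vscvar}(G)>k$. For the converse, if $\cycr{G}\geq k$ and $G$ is not strongly connected, some strongly connected component $H$ satisfies $\cycr{H}\geq k$, and a haven for $H$ lifts to $G$ via $h(X):=h_H(X|_H)$ where $X|_H$ is the subword of $X$ retaining only vertices in $V(H)$; if $G$ is strongly connected, $\cycr{G\setminus\{v\}}\geq k-1$ for every $v$, and setting $h(\epsilon):=G$ and $h(vY):=h_v(Y)$ for an inductively obtained LIFO-haven $h_v$ of $G\setminus\{v\}$ of order $>k-1$ produces a LIFO-haven of $G$ of order $>k$, with (H1) and (H2) routine to verify.

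For (iii), given a strong shelter $\mathcal{S}$ of thickness $>k$, the fugitive in the $\sstat{\vscvar}$-game maintains an element $S\in\mathcal{S}$ contained in the current region $R$; when the searcher places $v\in R$ with $v\in S$, the fugitive descends to some $S'\in M_{\mathcal{S}}(S)$ avoiding $v$ (which exists since $\bigcap M_{\mathcal{S}}(S)=\emptyset$) and moves to the strongly connected component of $G\setminus\letters{X'}$ containing $S'$; otherwise the fugitive keeps $S$ and updates $R$ accordingly. Because thickness $>k$ forces every maximal $\subseteq$-chain to have length $\geq k+1$, the fugitive can execute at least $k$ descents, so surviving all $k$ searcher placements and establishing $\sstat{\vscvar}(G)>k$. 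For the converse, I build a shelter by induction on $|V(G)|$: the non-strongly-connected case passes to the component of maximum cycle-rank; for strongly connected $G$ with $\cycr{G}\geq k$, take $\mathcal{S}_G=\{V(G)\}\cup\bigcup_{v\in V(G)}\mathcal{S}_v$, where $\mathcal{S}_v$ is an inductively constructed shelter of thickness $>k-1$ on a strongly connected component $H_v\subseteq G\setminus\{v\}$ of maximum cycle-rank.

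The most delicate point is verifying the intersection condition in the strong-shelter construction at the root: picking a single vertex (as in the game proof) would leave a unique maximal child and fail $\bigcap M_{\mathcal{S}_G}(V(G))=\emptyset$, so the union over all $v$ is essential, because every element of $\mathcal{S}_v$ avoids $v$ and hence no single vertex lies in every maximal proper subset of $V(G)$ in $\mathcal{S}_G$. One must then also check that cross-branch containments among the $\mathcal{S}_v$ do not spoil maximality or the intersection condition at lower levels; this bookkeeping, together with ensuring that descending the shelter matches the game's vertex-deletion semantics (so that the strongly connected components of $G\setminus\letters{X'}$ always accommodate the fugitive's chosen shelter element), is the main technical obstacle in the argument.
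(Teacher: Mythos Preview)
Your argument is correct but organised differently from the paper. The paper closes the cycle (i)$\Rightarrow$(ii)$\Rightarrow$(iii)$\Rightarrow$(i): a haven gives a winning fugitive $\vscvar$-strategy (the same as your first half of (ii)); a shelter of thickness $k$ is turned directly into a haven of order $k$ (for each new letter $v$, descend to a maximal child of the current shelter element that misses $v$, and let $h(X)$ be the component containing it), giving (ii)$\Rightarrow$(iii) by contraposition; and $\cycr{G}\geq k$ yields a shelter of thickness $k+1$ via exactly your inductive union $\{V(G')\}\cup\bigcup_v\mathcal S_v$. So only three implications are needed, whereas you prove four. Your two additional constructions---a haven built directly from the cycle-rank recursion, and a fugitive strategy in the searcher-stationary game driven by a shelter---are legitimate alternatives, but the paper's shelter$\to$haven step is more economical and sidesteps the game entirely at that stage. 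One small wrinkle in your direct haven construction: writing $h(vY)=h_v(Y)$ presumes $Y\in V(G\setminus\{v\})^*$, so repeated occurrences of $v$ in $Y$ must be stripped before applying $h_v$; this is harmless but should be stated. The ``cross-branch'' subtlety you flag in the shelter construction is treated just as briefly in the paper; your caution there is warranted but does not separate the two approaches.
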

\begin{proof}
%
(i) $\Rightarrow$ (ii).  Assume that it is not the case that $G$ has no LIFO-haven of order $>k$, that is, $G$ has a LIFO-haven $h$ of order $k+1$.  We show the fugitive has a winning strategy against $k$ searchers, so by Theorem~\ref{thm:game-char}, $\cycr{G} \geq k$.  Define a $\vscvar$-strategy $\rho$ for the fugitive (against $k$ searchers) by defining $\rho(X,R,X') = h(X')$ for all suitable triples $(X,R,X')$.  From (H1), $(X',\rho(X,R,X'))$ is a valid $\vscvar$-position.  Furthermore, (H2) implies that if $(X,R)$ is a $\vscvar$-position such that $R = h(X)$, then $(X',\rho(X,R,X'))$ is a $\vscvar$-successor of $(X,R)$, so $\rho$ is a $\vscvar$-strategy (defined for all LIFO-searches that use at most $k$ searchers).  Also,  if $(X_0,R_0), (X_1,R_1) \ldots$ is a complete LIFO-search consistent with $\rho$ then $R_i = h(X_i)$ for all $i>0$.  As $h(X) \neq \emptyset$ when $|X|\leq k$, it follows that all consistent complete LIFO-searches that use at most $k$ searchers are winning for the fugitive.  Thus $\rho$ is a winning strategy for the fugitive, so $\lifo{\vscvar}(G) > k$.  By Theorem~\ref{thm:game-char}, $\cycr{G} \geq k$.

(ii) $\Rightarrow$ (iii).  We show that a strong shelter $\mathcal{S}$ of thickness $k$ can be used to define a haven of order $k$.  For each $X \in V(G)^{<k}$ we define $S_X \in \mathcal{S}$ inductively as follows.  For $X = \epsilon$, let $S_\epsilon$ be any $\subseteq$-maximal element of $\mathcal{S}$.  Note that $\{ S \in \mathcal{S}\st S \subset S_{\epsilon}\}$ is a strong shelter of thickness $k-1$.  Now suppose $X = X'v$, $S_{X'}$ is defined, $S_{X'} \cap \letters{X'} = \emptyset$, and $\mathcal{S}_{X'} = \{S \in \mathcal{S}\st S \subset S_{X'}\}$ is a strong shelter of thickness $k-1-|X'|$.  From the definition of a strong shelter, there exists a $\subseteq$-maximal element of $\mathcal{S}_{X'}$ that does not contain $v$, as otherwise $v \in S$ for all $S \in M_{\mathcal{S}}(S_{X'})$.  Let $S_X$ be that element.  As $S_{X'} \cap \letters{X'} = \emptyset$ and $v \notin S_X$, it follows that $S_X \cap \letters{X} = \emptyset$.  Further, $\{S \in \mathcal{S}\st S \subset S_X\}$ is a strong shelter of thickness $(k-1-|X'|)-1 = k-1-|X|$, satisfying the assumptions necessary for the next stage of the induction.  
Now for all $X \in V(G)^{<k}$, $S_X$ is a non-empty strongly connected set such that $S_X \cap \letters{X} = \emptyset$.  Thus there is a unique strongly connected component of $G\setminus \letters{X}$ that contains $S_X$.  Defining $h(X)$ to be that component we see that $h$ satisfies (H1).  For (H2), from the definition of $S_X$, if $X \preceq Y$ and $|Y|<k$, then $S_X \supseteq S_Y$, so $h(X) \supseteq h(Y)$.  Therefore $h$ is a haven of order $k$.

(iii) $\Rightarrow$ (i).  Again, we prove the contrapositive, using a proof similar to~\cite{GT11}.  Suppose $\cycr{G} \geq k$.  Let $G'$ be a strongly connected component of $G$ which has cycle-rank $\geq k$.
We prove by induction on $k$ that $G'$, and hence $G$, has a strong shelter of thickness $k+1$.  Every digraph with $|V(G)|\geq 1$ has a strong shelter of thickness $1$: take $\mathcal{S} = \{\{v\}\}$ for some $v \in V(G)$.  Thus for $k=0$, the result is trivial.  Now suppose for $k'<k$ every digraph of cycle-rank $\geq k'$ contains a strong shelter of thickness $k'+1$.  For $v \in V(G')$, let $G'_v = G' \setminus \{v\}$.  From the definition of cycle-rank, $\cycr{G'_v} \geq k-1$ for all $v \in V(G')$.  Thus, by the induction hypothesis, $G'_v$ contains a strong shelter, $\mathcal{S}_v$, of thickness $(k-1)+1$.  As $v \notin S$ for all $S \in \mathcal{S}_v$, it follows that $\mathcal{S} = \{G'\} \cup \bigcup_{v \in V(G')} \mathcal{S}_v$ is a strong shelter.  As $\mathcal{S}_v$ has thickness $k$ for all $v \in V(G')$, $\mathcal{S}$ has thickness $k+1$.
\end{proof}
\section{Conclusions and further work}\label{sec:conc}
Combining Theorems~\ref{thm:game-char} and~\ref{thm:min-max} gives our main result:
\begin{mainthm}
Let $G$ be a digraph, and $k$ a positive integer.  The following are equivalent:
\begin{enumerate}[(i) ]
\item $G$ has cycle-rank $\leq k-1$,
\item On $G$, $k$ searchers can capture a fugitive with a LIFO-search strategy,
\item On $G$, $k$ searchers can capture a visible fugitive restricted to moving in strongly connected sets with a searcher-stationary search strategy, 
\item $G$ has no LIFO-haven of order $> k$, and
\item $G$ has no strong shelter of thickness $>k$.
\end{enumerate}
\end{mainthm}
This multiple characterization of cycle-rank gives a new perspective on the measure which can be useful for further investigation.  For example, whilst it is known that computing the cycle-rank is NP-complete~\cite{Gru08}, the characterization in terms of a graph searching game with a visible fugitive automatically implies that for any fixed $k$, deciding if a digraph has cycle-rank $k$ is decidable in polynomial time.  From a parameterized complexity perspective, techniques based on separators have shown measures such as directed treewidth are fixed-parameter tractable.  Whether the visible, strongly connected game characterizations of cycle-rank can improve the known complexity from XP to FPT is part of ongoing research.

\bibliographystyle{plain}

\end{document}